\documentclass[leqno,12pt]{article}  

\usepackage[margin=1.0in]{geometry}
\sloppy
\usepackage{amsfonts}
\usepackage{amsmath}
\usepackage[noend]{algpseudocode}
\def\R{\mathbb{R}}
\def\L{\mathcal{L}}
\def\C{\mathcal{C}}
\def\A{\mathcal{A}}
\def\orig{\mathbf{0}}
\newtheorem{theorem}{Theorem}
\newtheorem{lemma}{Lemma}
\newtheorem{proposition}{Proposition}
\newtheorem{corollary}{Corollary}
\newenvironment{proof}{\unskip{\bf Proof:}}{\unskip{\hfill $\Box$}}
\def\qed{\unskip{\hfill $\Box$}}

\title{The Complexity of Order Type Isomorphism}
\date{}
\author{Greg Aloupis\thanks{Charg\'e de recherches du F.R.S.-FNRS,
    D\'{e}partement d'Informatique, Universit\'{e} Libre de Bruxelles, 
{\tt aloupis.greg@gmail.com}}
 \and 
 John Iacono\thanks{Department of Computer Science and Engineering, Polytechnic Institute of New York University, {\tt jiacono@poly.edu, ozgurozkan@gmail.com}. Research partially supported by NSF grants CCF-0430849,  CCF-1018370.}
  \and 
 Stefan Langerman\thanks{Directeur de Recherches du
   F.R.S.-FNRS,~D\'{e}partement d'Informatique, Universit\'{e} Libre de Bruxelles,
   {\tt stefan.langerman@ulb.ac.be}. Research supported by F.R.S.-FNRS
 and DIMACS.}
  \and
   \"{O}zg\"{u}r \"{O}zkan\footnotemark[2]
 \and 
    Stefanie Wuhrer\thanks{Cluster of Excellence MMCI, Saarland University,
   {\tt swuhrer@mmci.uni-saarland.de} }}

\begin{document}

\maketitle

\begin{abstract}\small\baselineskip=9pt 
The order type of a point set in $\R^d$ maps each $(d{+}1)$-tuple of points
to its orientation (e.g., clockwise or counterclockwise in $\R^2$). 
Two point sets $X$ and $Y$ have the same order type if there
exists a mapping $f$ from $X$ to $Y$ for which every $(d{+}1)$-tuple 
$(a_1,a_2,\ldots,a_{d+1})$ of $X$ and the corresponding
tuple $(f(a_1),f(a_2),\ldots,f(a_{d+1}))$ in $Y$ have the same orientation.
In this paper we investigate the complexity of determining whether two
point sets have the same order type.
We provide an $O(n^d)$ algorithm for this task, thereby improving upon
the $O(n^{\lfloor{3d/2}\rfloor})$ algorithm of Goodman and Pollack (1983). The
algorithm uses only order type queries and also works for abstract
order types (or acyclic oriented matroids).
Our algorithm is optimal, both in the abstract setting and for
realizable points sets if the algorithm only uses order type queries.
\end{abstract}

\section{Introduction}
In the design of geometric algorithms, as well as in their practical
implementation, it is often convenient to encapsulate the geometry of
a given
problem into a small set of elementary geometric predicates. A typical
example, ubiquitous in computational geometry textbooks, is the \emph{left
turn / right turn} determinant
$$\nabla(a,b,c) =
\begin{vmatrix}
a_x & a_y & 1 \\
b_x & b_y & 1 \\
c_x & c_y & 1 \\
\end{vmatrix}
$$
whose sign ($>0$, $<0$, or $0$) determines if three points $a, b, c\in\R^2$
are in clockwise or counterclockwise orientation, or collinear, respectively.

The practical motivation for this encapsulation will be obvious to any
programmer: by restricting the use of arithmetic operations to just
one place in the code, it is easier to control the robustness of the
code (e.g. with respect to roundoff errors). It is also easier to
generalize the code should a different geometric space require a
slightly different implementation of the predicate (e.g. solving 
geometric problems on a sphere or in a polygon). This
would require a proper abstraction to generalize the predicate
$\nabla$ to other applications.

The need for a classification or discretization of planar point sets
became evident long before computers were invented. 
In 1882,
Perrin~\cite{Perrin1882} described how a point moving on a line far enough from a
collection of points sees the points under a sequence of $n\choose 2$
different radial orders, each produced by swapping two adjacent labels
from the previous ordering. He then showed how this representation
can be used to solve problems without the use of the original
point set. This view of point configurations was revived and
characterized under the name of \emph{allowable sequences} 
by Goodman and Pollack in 1980~\cite{Goodman1980220}. They later showed how the same
allowable sequences can describe pseudoline arrangements~\cite{Goodman1984257}.

The classification of point sets induced by the determinant $\nabla$ above, but
generalized to $d$ dimensions, was discovered around the same time.
Consider a set $P = \{p_1,\ldots,p_n\}\subseteq\R^d$, let $p_i = (x_{i,1},\ldots,x_{i,d})$ and for ease of notation let $x_{i,0}=1$.
The \emph{order type} of $P$ is characterized  by the predicate\footnote{We write $[n]$ to denote the set of integers $\{1,\ldots,n\}$.}
\begin{multline*}
\nabla^P(i_0,i_1,\ldots,i_d) = \nabla(p_{i_0},\ldots,p_{i_d})\\
= \text{sign}(\det(p_{i_0},\ldots,p_{i_d})),
\text{ for all }\{i_0,\ldots,i_d\}\in [n].
\end{multline*}
This concept appeared independently in various contexts over a span of 15
years, under various names, e.g., 
\emph{$n$-ordered sets}~\cite{Novoa}, \emph{multiplex}~\cite{multiplex},
\emph{chirotope}~\cite{dreiding},
\emph{order type}~\cite{goodman_pollack_83_sorting}, among others inspired by problems from chemistry.
For some of them (e.g., chirotopes or abstract order types), 
the precise algebraic definition above is replaced by a set of axioms that the
predicate $\nabla^P$ must satisfy. 

In the early 90's Knuth~\cite{DBLP:books/sp/Knuth92} revisited once more the axiomatic system of
chirotopes under the name of \emph{CC-systems}, but this time with a
specific focus on computational aspects, mainly, what predicates and
axioms are necessary in order to compute a convex hull (and later a
Delaunay triangulation), and what running times can be obtained by an
algorithm using only those predicates.

The theory of \emph{oriented matroids}
 appeared in the mid
'70s. Their primary purpose was to provide an abstraction of linear
dependency.  However, through their various equivalent axiomatizations
they have been used to show a translation between virtually all
abstractions mentioned above. 

\paragraph{Isomorphism}
is probably one of the most fundamental problems for any
discrete structure. In graph theory,
determining whether two graphs are isomorphic is one of the few
standard problems in NP not yet known to be NP-complete and not known
to be in P.
In our setting, two (abstract) order types with predicates $\nabla^P$
and $\nabla^Q$ are \emph{identical}\footnote{Sometimes, two order types are
  also considered identical if all orientations are reversed, i.e.,
  $\nabla^P=-\nabla^Q$.} 
if 
$$\nabla^P(i_0,\ldots,i_d) = \nabla^Q(i_0,\ldots,i_d)
\text{ for all }\{i_0,\ldots,i_d\}\subseteq [n],$$
or more succinctly $\nabla^P=\nabla^Q$.
They are \emph{isomorphic} if there is a permutation $\pi$ such
that 
\begin{multline*}
\nabla^P(i_0,\ldots,i_d) = \nabla^Q(\pi(i_0),\ldots,\pi(i_d))\\
\text{ for all }\{i_0,\ldots,i_d\}\subseteq [n],$$ 
\end{multline*}
or more succinctly, $\nabla^P=\nabla^Q\circ\pi$.

In their seminal paper~\cite{goodman_pollack_83_sorting}, Goodman and Pollack listed an extensive array
of applications of order type isomorphism.  One of these was to be able
to efficiently list all point set configurations in order to test 
several important conjectures in discrete geometry, such as the Erd\H{o}s
and Szekeres conjecture on convex independent sets. In 2002,
Aichholzer et al.~\cite{aichholzer_etal_02} took on that challenge and
proceeded to build a database of  order types for up to 10 points,
which was later extended to 11
points~\cite{aichholzer_krasser_05_abstract_order_type}. Using this
database, they were able to provide new bounds for several open
problems.

Given a labeled point set $P$, an \emph{Order Type Representation} (OTR) is
a function $E$ that only depends on $\nabla^P$ and  encodes the
order type as a string, meaning that using that string, the orientation 
$\nabla^P(i_0,\ldots,i_d)$ of every $d{+}1$-tuple can be retrieved.
We will write $E(P) = E(\nabla^P)$ for that string.
For example, in 1983, Goodman and Pollack~\cite{goodman_pollack_83_sorting} implicitly defined an
encoding of size $O(n^d)$ which lists for every $d$-tuple of integers
$(i_1,\ldots,i_d)$ the number of values $i_0$ for which 
$\nabla^P(i_0,\ldots,i_d)=+$. They showed that  these values suffice to retrieve
the value of $\nabla^P$ for every $d{+}1$-tuple. 

One strategy for identifying whether $P$ and $Q$  have the same order
type is  to fix a labeling of $P$, try every possible
labeling for $Q$, and compare their OTRs, that is, to check whether 
$E(\nabla^P)=E(\nabla^Q\circ\pi)$ for any permutation $\pi$.
In~\cite{goodman_pollack_83_sorting} 
it was shown
 that for comparing two order
types, it suffices to look at a reduced set of \emph{canonical
  labelings}. In $\R^2$ these are produced by listing all points in
counterclockwise order from some point on the convex hull of $P$. 
In $\R^d$, labelings are generated by convex hull \emph{flags}.
Thus there are at most $O(h) = O(n^{\lfloor d/2 \rfloor})$ canonical orderings where
$h$ is the number of flags  on the convex hull of the sets. 
Using this observation, and the fact that their OTR is of length $O(n^d)$, it was shown
in~\cite{goodman_pollack_83_sorting} that the equality of two order types can be determined in $O(hn^d)
= O(n^{\lfloor 3d/2 \rfloor})$ time.
To our
knowledge, that running time has not been improved for arbitrary $d$.
For $\R^3$ an improvement to $O(n^3\log n)$ has been given for points in general position~\cite{ordertypes_EuroCG12}.\\

\paragraph{Automorphisms and canonical labelings.}
The isomorphism problem is naturally connected to the automorphism
problem, which is to determine the group of permutations $\pi$ such
that  $\nabla^P = \nabla^P\circ\pi$.
One common technique to discover automorphisms is through the use of
\emph{canonical labelings}.
A canonical labeling $\rho^*(\nabla^P)$ for an order type with predicate
$\nabla^P$ is a permutation such that 
$\rho^*(\nabla^P) = \rho^*(\nabla^P\circ\pi)$ for any permutation
$\pi$.
One way of producing such a labeling is to pick $\rho^*$  
(possibly among a reduced set, as done by Goodman and Pollack) as the
labeling that produces the representation $E(\nabla^P\circ\rho^*)$
that is lexicographically minimum (abbreviated as ``{\em MinLex}" later on).
Then, the automorphism group of the order type is just the set of
permutations $\rho$ such that $E(\nabla^P\circ\rho)=E(\nabla^P\circ\rho^*)$.
Of course, using a canonical labeling it is easy to solve the
isomorphism problem as it is sufficient to check whether the canonical
representations of the two order types match.

It is worth noting that the canonical labeling problem could potentially be
 harder than that of isomorphism. For instance, in the case of graphs,  finding a canonical labeling is NP-complete.\\

\paragraph{Our results.}
We present the first $O(n^d)$-time algorithm for producing a 
canonical labeling and the automorphism
group of an order type. Consequently the algorithm can also be used to
determine if two
order types are isomorphic. The algorithm works for any $d\geq 2$ and
does not assume general position. 
It uses no other information than what is given by the order type
predicate (used as an oracle), and works for abstract order types, or
acyclic oriented matroids of rank $d{+}1$.
For abstract order types, it was shown by Goodman and
Pollack~\cite{goodman_pollack_83_sorting} that there are
$2^{\Omega(n^2)}$ different abstract order types of dimension
$2$. Combining this with the information theory lower bound, this
implies that our algorithm is optimal in the abstract case, for
$d=2$. 
If the order type is realizable (i.e., the predicate $\nabla$ is
computed from an actual set of points in $\R^2$), the number of
different order types is much smaller. Goodman and
Pollack~\cite{goodman_pollack_86_upper_bounds} showed that the number
of order types on $n$ points is at least $n^{4n+O(n/\log n)}$ and at
most $n^{6n}$. They improved the upper bound later to
$\left(\frac{n}{2}\right)^{4n(1+O(1/\log(n/2)))}$~\cite{goodman_pollack_91_survey}.
Therefore in this case, the information theory lower bound only gives
a bound of  $\Omega(n\log n)$.
Nevertheless, 
Erickson and Seidel
showed~\cite{DBLP:journals/dcg/EricksonS95,DBLP:journals/dcg/EricksonS97}
using an adversary argument that any algorithm solving order type
isomorphism 
using exclusively the $\nabla$ predicate must query the predicate
$\Omega(n^d)$ times, even if the order type is realizable. 
This shows our algorithm is optimal even for realizable order types in
that model.

\section{Preliminaries}\label{sec:preliminaries}

This section provides a brief and informal overview of the different abstractions for point configurations used in the literature. \\

\paragraph{Euclidean and oriented projective geometry.}
Computational geometers traditionally manipulate points in a Euclidean
plane. When it is necessary or convenient to consider points
at infinity, the projective plane is defined by adding a line at
infinity. More formally, the projective plane is produced by
adding an extra coordinate $z$ to the Euclidean plane $xy$. 
The usual Euclidean plane coincides with the plane $z=1$, that is,
points of Euclidean coordinates $(x,y)$ become $(x,y,1)$. Any point
$(x,y,z)$ is considered to be represented by $(ax,ay,az)$ for all
$a \neq 0$ as well. In other words all points on a line through
the origin $(0,0,0)$ correspond to the same \emph{projective point}.
Thus a projective point can be visualized as a pair of antipodal points on
the sphere $x^2+y^2+z^2=1$. Points at infinity then correspond to
points on the great circle at the intersection between the sphere and
the plane $z=0$. 

However, this transformation from the Euclidean plane to the
projective plane does not preserve the notion of orientation for a
triple of points, as a line through two points does not disconnect the
projective plane. In order to preserve the orientation information,
one could use \emph{oriented projective geometry}~\cite{DBLP:tibkat_025890042}
where points
$(x,y,z)$ are only equivalent to points $(ax,ay,az)$ for $a>0$. 
Thus in the oriented projective plane, a point can be seen as a
\emph{single} point on a sphere $x^2+y^2+z^2=1$. 

Another convenient view of the oriented
projective plane is obtained by gluing two Euclidean planes (at $z=-1$
and $z=1$) using a line at infinity. Every point not at infinity is
then either positive 
or
 negative. A finite collection of points in
the oriented projective plane can
then be represented as a collection of signed points in the plane
$z=1$ by taking the reflection of the negative points through the
origin (the sphere can be rotated to ensure no points lie in the plane
$z=0$).
Note that every triple of points in the oriented projective plane has
a well-defined orientation. In the last representation, the
orientation of a triple of signed points can be computed by multiplying
the unsigned orientation of the points by the signs of the three
points. \\

\paragraph{Projective duality.}
One can gain
 much
 insight into the combinatorial structure of a
discrete
  point set by using \emph{projective duality}. In the
projective plane, the dual of a projective point $(a,b,c)$ is the
plane (called a \emph{projective line}) through the origin with normal
vector $(a,b,c)$, i.e., $ax+by+cz=0$, 
or the great circle where this plane intersects the unit sphere. In
the oriented projective plane, the dual of an oriented point is the
halfspace $ax+by+cz\geq 0$ or the corresponding hemisphere. 
It can 
be verified that this duality transform preserves
incidence between a point and a projective line, and containment between a point
and a halfplane. 

Back in the Euclidean plane $z=1$, this duality transform corresponds
to what is traditionally called the \emph{polar dual}. The point
$(a,b)$ maps to the line $ax+by = -1$ or the halfplane $ax+by \geq -1$
which contains the origin. A set $S$ of points in the Euclidean plane then maps to a
collection $H$ of halfplanes all containing the origin. 
On the sphere, these are hemispheres all containing
the pole $(0,0,1)$ and the convex
hull of $S$ is dual to the intersection of these hemispheres, that is,
the set of hemispheres containing $S$ is dual to the set of points
contained in all the dual hemispheres. 
The arrangement of circles bounding the hemispheres provides some
further information. Each cell $c$ of the corresponding arrangement is
contained in a specific set $H_c\subseteq H$ of hemispheres and is
dual to the hemispheres containing the corresponding set $S_c$ of
points exactly. 
In the oriented projective plane, a negative point will
dualize to a halfplane not containing the origin.
As a set of arbitrary halfplanes or hemispheres is not guaranteed to
have a common intersection,  a set of signed points or a set of points in
the oriented projective plane is not guaranteed to have a bounded convex
hull. In fact, a set of points on the sphere has a bounded convex hull
if and only if all points are strictly contained in a hemisphere. Otherwise the
convex hull is the entire oriented projective plane (or a line if all
points are on the same projective line).
Note however that every triple of hemispheres in general position has
a non-empty intersection, and that the orientation of a triple of
hemispheres can be inferred from the order of appearance of their
boundaries along their intersection.\\

\paragraph{Oriented matroids.}
For a collection $E$ of
 oriented great circles on a sphere $S$ and any
point $q$ on $S$, one 
can write a sign vector indicating for each circle if $q$ is in
the positive ($+$) or negative ($-$) hemisphere, or on the
circle itself ($0$). The resulting vector\footnote{We use the notation
  $\{+,-,0\}^E$ to mean a vector of length $|E|$, whose elements take
  values in $\{+,-,0\}$ and are indexed by the elements of $E$. We
  assume the elements of $E$ are ordered, and we write $X = (X_{e_1},X_{e_2},...)$
  to list the values of a vector in the order of their index set
  $E$. The signed vector $X$ is also interpreted as a signed subset of
  $E$, or a pair of disjoint subsets of $E$: $X=(X^+,X^-)$,
  $X^\sigma=\{e|X_e=\sigma\}$ for $\sigma\in\{+,-,0\}$. We write
  $z(X) = X^0$ and $\underline{X} = X^+ \cup X^-$. 
  Set operations can be used, e.g. if $F\subseteq E$, 
  $X\setminus F = (X^+\setminus F, X^-\setminus F)$, 
  $X|_F = X\setminus(E\setminus F)$.
}   
in $\{+,-,0\}^E$ is called a
\emph{covector}. Let $\L$ be the collection of 
all  such 
covectors for
all points on $S$,  along with the
 the vector
 $\orig=(0,0,\ldots,0)$.
Define the composition operator between sign vectors
  \[(X \circ Y)_e = 
  \begin{cases}
    X_e & \text{if } X_e \neq 0 \\
    Y_e & \text{if } X_e = 0
  \end{cases}
  \text{  }\forall e\in E.
  \]

\noindent The collection $\L$ has several interesting properties:
\begin{description}
\item[(CV0)] $\orig\in\L$.
\item[(CV1)] If $X\in\L$ then $-X\in\L$.
\item[(CV2)] If $X,Y\in\L$ then $X\circ Y\in\L$.
\item[(CV3)] If $X,Y\in\L$, and there exists $e\in E$ such that $\{X_e,Y_e\} = \{+,-\}$. 
Then there is a $Z\in\L$ where $Z_e=0$, and for all $f\in E$ such
  that $\{X_f,Y_f\} \neq \{+,-\}$, $Z_f = (X \circ Y)_f$
\end{description}
(CV1) says every point has an opposite point on the sphere. (CV2) shows
what would happen if you moved by a tiny amount from the point
defining $X$ in the direction of the point defining $Y$. For (CV3), if
points $p$ defining $X$ and $q$ defining $Y$ are separated by 
a projective line $\ell$ then $Z$ would be the covector of the
intersection of the segment $pq$ and $\ell$.

In general, any collection $\L$ that satisfies (CV0-3) defines an
\emph{oriented matroid}. Define the partial order $(\L,\leq)$ where
$X\leq Y$ if $X_e = Y_e$ whenever $X_e\neq 0$. The rank of an oriented
matroid is the length of the longest chain in that partial order,
minus 1. 
In the case of our arrangement of circles, the rank of the associated 
oriented matroid is 3 ($\orig < \text{vertex} < \text{edge} < {face}$).
The \emph{cocircuits} $\C$ of the oriented matroid is the set
of minimal elements in $\L-\{\orig\}$ (in this case, the
vertices of the arrangement). Given a collection $\C$ of cocircuits,
the corresponding set of covectors $\L$ can be reconstructed by
successive compositions (e.g., $C_1\circ C_2 \circ \ldots \circ C_k$) of elements of $\C$.

An oriented matroid is \emph{acyclic} if it contains the
covector $(+,+,\ldots,+)$. This corresponds to the property of all
positive hemispheres having a nonempty intersection, i.e.  the
corresponding set of points in the oriented projective plane has a
bounded convex hull.
An element $e\in E$ is an extreme element of the oriented matroid if there is a
covector $X\in\L$ where $X_e$ is the only positive 
element 
(see~\cite{Bjorner:Oriented}, convexity Proposition 1.6).
 For example if $\L$ represents the
dual arrangement of a planar point set, the covector $X$
represents a halfplane containing only point $e$, and thus $e$ is an
extreme point.   \\

\paragraph{Pseudo-hemispheres and the topological representation theorem.}
Sets of lines in the plane generalize to \emph{pseudolines}, a
collection of topological lines that pairwise intersect and cross exactly
once. In the projective plane, an arrangement of circles
can be generalized to an arrangement of \emph{pseudocircles}, a collection of
closed curves,
every pair of which
intersects and properly crosses 
exactly twice. In the oriented projective plane, each
pseudocircle is given an orientation and defines a positive and
negative \emph{pseudohemisphere}. 

All notions mentioned above generalize to $d$-dimensional spaces. The
generalization of Euclidean, projective, oriented projective spaces
and projective duality is straightforward. The generalization to
pseudospheres and pseudohemispheres requires a bit more care; for the
exact definition see, e.g.~\cite{Bjorner:Oriented} or~\cite{go-hdcg-04}. 
Sign vectors generalize as well and the covectors generated by an
arrangement of oriented pseudospheres on the $d$-sphere define an
oriented matroid (i.e., they satisfy (CV0-3)) of rank $d{+}1$. 
A surprising fact is that the converse
is true: any oriented matroid of rank $d{+}1$ without loops\footnote{A
  loop is an element whose sign is $0$ in every covector in $\L$} can
be realized as a set of oriented pseudospheres on a $d$-sphere 
(\cite{DBLP:journals/jct/FolkmanL78}, Topological representation Thm. 5.2.1, p.
233).  \\

\paragraph{Chirotopes.}
A proper axiomatization generalizing the order type predicate $\nabla$
mentioned in the introduction is provided by the notion of
\emph{chirotope}\footnote{To stay in line with the oriented matroid
  literature, we use the symbol $\chi$ to denote a chirotope in this
  section. In the subsequent sections, we will use $\nabla$ to mean an
abstract order type or an acyclic oriented matroid.}. 
We state one of its several equivalent definitions
for completeness although we will not be using it directly.
A \emph{chirotope} (\cite{Bjorner:Oriented}, p.128) of rank $d{+}1$ for a collection $E$ of $n$ elements is
a non-zero alternating\footnote{A map is alternating if swapping two
  of its arguments negates its value} map 
$\chi: E^{d+1} \rightarrow \{+,-,0\}$ 
satisfying:

\vspace{3mm}
{\noindent\bf(B2$'$)}  For all $x_1,\ldots,x_{d+1},y_1,\ldots,y_{d+1}\in E$  such that
$\chi(x_1,\ldots,x_{d+1}) \chi(y_1,\ldots,y_{d+1})\neq 0$,\\
there is an $i \in \{1,\ldots,{d+1}\}$ such that\\
$\chi(y_i,x_2,\ldots,x_{d+1}) \chi(y_1,\ldots,y_{i-1},x_1,y_{i+1},\ldots,y_{d+1}) =
 \chi(x_1,\ldots,x_{d+1}) \chi(y_1,\ldots,y_{d+1})$.
\vspace{3mm}

\noindent Interestingly, chirotopes are just another possible
representation of oriented matroids, as shown by the following theorem
(Chirotope/Cocircuit translation, Thm.6.2.3,
p.138,~\cite{go-hdcg-04}).

\begin{theorem}[Chirotope/Cocircuit~\cite{go-hdcg-04}]
For each chirotope $\chi$ of rank $d{+}1$, the set
$\C(\chi) = \{(\chi(\lambda,1),\ldots,\chi(\lambda,n)) | \lambda\in E^{d}\}$
is the set of circuits of an oriented matroid of rank $d+1$. Conversely for every
oriented matroid with cocircuits $\C$, there exist a pair of
chirotopes $\{\chi,-\chi\}$ such that $\C(\chi) = \C(-\chi) = \C$. 
\end{theorem}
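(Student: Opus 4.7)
The plan is to prove the two directions separately, both times using the exchange axiom (B2$'$) as the bridge to the cocircuit elimination axiom.

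For the forward direction, given a chirotope $\chi$, I would verify that $\mathcal{C}(\chi)$ satisfies the standard cocircuit axioms of an oriented matroid of rank $d{+}1$: (i) $\orig\notin\mathcal{C}(\chi)$ follows from $\chi$ being nonzero; (ii) $\mathcal{C}(\chi)=-\mathcal{C}(\chi)$ follows because swapping two entries of $\lambda\in E^d$ negates $\chi$ (the alternation axiom), producing the opposite sign vector; (iii) the minimality/incomparability of supports is a dimension argument — an element $f\notin\underline{\lambda}$ has $\chi(\lambda,f)=0$ iff $f$ lies in the ``hyperplane'' spanned by $\lambda$, and two cocircuits whose zero sets are contained in one another must come from the same $\lambda$ up to a unit; (iv) the cocircuit elimination axiom is obtained from (B2$'$) by choosing the two $(d{+}1)$-tuples as $(e,\lambda)$ and $(e,\mu)$ where $C_1=(\chi(\lambda,\cdot))$ and $C_2=(\chi(\mu,\cdot))$ and $e$ is the separating element. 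Once cocircuits are established, the covectors $\mathcal{L}$ (and hence rank $d{+}1$) are recovered by taking all compositions $C_1\circ\cdots\circ C_k$ of elements of $\mathcal{C}(\chi)$, as indicated earlier in the preliminaries.

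For the converse, given cocircuits $\mathcal{C}$, I would reconstruct $\chi$ up to global sign as follows. First, fix any basis $B=\{b_1,\ldots,b_{d+1}\}$ (a minimal spanning set under the composition preorder), and declare $\chi(b_1,\ldots,b_{d+1})=+$; this choice is the source of the $\{\chi,-\chi\}$ ambiguity. For any other ordered $(d{+}1)$-tuple $(x_1,\ldots,x_{d+1})$, there is essentially a unique cocircuit $C\in\mathcal{C}$ whose zero set contains $\{x_2,\ldots,x_{d+1}\}$ (when this set has corank $1$), and I would define $\chi(x_1,\ldots,x_{d+1})=C_{x_1}$, correcting by the sign of the permutation needed to reorder $(x_2,\ldots,x_{d+1})$. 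Alternation is then enforced by construction, and non-degeneracy follows because $\mathcal{C}$ spans the ground set. The nontrivial step is to verify the exchange axiom (B2$'$): this is exactly the cocircuit elimination axiom read backwards, and propagating a local elimination witness to the required product identity $\chi(y_i,x_2,\ldots)\chi(y_1,\ldots,x_1,\ldots,y_{d+1})=\chi(x_1,\ldots)\chi(y_1,\ldots)$ uses the rank-$(d{+}1)$ structure encoded in $\mathcal{L}$. Finally one checks $\mathcal{C}(\chi)=\mathcal{C}$ by revisiting the definition.

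The main obstacle is the second direction, specifically verifying (B2$'$). The chirotope axiom is a Grassmann--Pl\"ucker-style multiplicative identity, while cocircuit elimination is a sign-vector statement about modifying disagreements; matching them requires picking the index $i$ in (B2$'$) precisely as the ``eliminated'' element in a suitable covector computation, and tracking the alternating signs through the reordering. I would handle this by reducing to the case where $(x_1,\ldots,x_{d+1})$ and $(y_1,\ldots,y_{d+1})$ share all but one element (so one is modifying a single entry), establishing the identity there directly from cocircuit elimination, and then composing such elementary exchanges to obtain the general case.
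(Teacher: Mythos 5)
The paper does not prove this theorem; it is cited verbatim from the Handbook of Discrete and Computational Geometry (\cite{go-hdcg-04}), where it appears as Theorem 6.2.3 and is attributed to the chirotope/oriented-matroid equivalence of Lawrence and Las~Vergnas. So there is no in-paper argument to compare against. Evaluated on its own, your sketch has the right overall architecture (verify cocircuit axioms one way; fix a basis orientation and propagate the other way), but it contains gaps precisely at the steps that make this theorem a theorem rather than a definition-chase.

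In the forward direction, item~(iii) is circular as stated: the claim that two sign vectors $(\chi(\lambda,\cdot))$ and $(\chi(\mu,\cdot))$ with nested zero sets agree up to sign is exactly the support-incomparability you need to prove, and invoking the ``hyperplane spanned by $\lambda$'' presupposes a well-defined closure operator, which in turn requires first establishing that the nonzero $(d{+}1)$-tuples of $\chi$ form the bases of a matroid---itself a consequence of (B2$'$) that must be proved, not assumed. For item~(iv), plugging $(e,\lambda)$ and $(e,\mu)$ into (B2$'$) does not immediately produce an eliminating cocircuit: since $x_1=y_1=e$, the guaranteed exchange index may well be $i=1$, which yields a tautology rather than a $d$-tuple $\nu$ with $e$ in its span and the required sign containments. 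Producing the right $\nu$ takes several applications of (B2$'$) together with the matroid basis-exchange structure. Also, $\C(\chi)$ as written contains $\orig$ (take $\lambda$ dependent), so it must be explicitly removed.

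The more serious gap is in the converse. You correctly identify (B2$'$) as the crux, but the proposed reduction---``establish the identity when $(x_1,\dots,x_{d+1})$ and $(y_1,\dots,y_{d+1})$ share all but one element, then compose elementary exchanges''---is not a valid deduction. (B2$'$) is a disjunction over an index $i$ depending on both tuples, and there is no obvious way to chain ``one-element-away'' instances into a general instance, because the witnessing index and the intermediate signs are not under your control. What you are implicitly invoking is Las~Vergnas's theorem that a nonzero alternating sign map satisfying the \emph{three-term} Grassmann--Pl\"ucker relations already satisfies the full axiom (B2$'$); that is a substantial, separately-proved result (Theorem 3.6.2 in Bj\"orner et al.), not a composition lemma. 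Without citing or reproving it, the converse direction as written does not close.
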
 

\paragraph{Algorithms.}
Edelsbrunner, O'Rourke, and Seidel~\cite{journals/siamcomp/EdelsbrunnerOS86} 
described an algorithm to construct the cell complex of a hyperplane
arrangement in $\R^d$ in time $O(n^d)$. In their conclusion, they
mentioned that their algorithm applies to arrangements of
pseudohyperplanes as well, provided they are computationally simple.
 A careful review of the algorithm reveals that in fact the only
primitive necessary to run the algorithm is to determine whether a
$1$-face of the cell complex (i.e., an edge) is intersected by a
pseudohyperplane. Since the 1-face is defined by $d{+}1$
pseudohyperplanes ($d{-}1$ define the supporting $1$-flat, and the
remaining two delimit the segment), the answer to this primitive can
be computed in $O(1)$ time using chirotope queries, by constructing
the set of cocircuits  of the arrangement of $d{+}2$ pseudohyperplanes
involved. 
The algorithm also needs $d$ pseudohyperplanes in general position in
order to start. For this, pick one arbitrary point and then for every 
${n{-}1}\choose d$ choices of the remaining $d$ points, check the orientation
of the resulting $(d{+}1)$-tuple until a non-zero set is found (which
will happen unless the chirotope predicate is identically 
zero).  
Then iteratively insert
new pseudohyperplanes, updating
the face complex where intersected by the new pseudohyperplane. The
Zone theorem~\cite{Edelsbrunner:1993:ZTH:153794.153828} shows the number of affected faces is $O(n)$.
In the process of constructing the arrangement, the algorithm will
also identify all duplicate copies of elements. We will assume
henceforth that the oriented matroids we consider contain no duplicate
elements.
Note that the algorithm relies on the zone theorem whose
original proof in $\R^d$ was flawed. A new proof however was published
several years later by Edelsbrunner, Seidel, and Sharir~\cite{Edelsbrunner:1993:ZTH:153794.153828}. The new proof also generalizes to pseudohyperplanes.

Once the pseudohyperplane arrangement is constructed, it is
straightforward to determine if the oriented matroid is acyclic
(i.e. if the corresponding abstract order type has a convex hull), in
$O(n^d)$ time, by verifying if there is a cell with covector $(+,\ldots,+)$.  
If it is, in the same running time we can extract from it the convex hull, or all convex
layers (which are constructed iteratively by computing the convex
hull and removing its vertices from $S$). \\

\paragraph{Minors and radial ordering.}
Consider an oriented matroid with covectors $\L\in\{+,-,0\}^E$, and
let $A\subseteq E$ be a nonempty subset of $E$. The
\emph{deletion}
$$\L\setminus A = \{X\setminus A | X\in\L\} 
\subseteq \{+,-,0\}^{E\setminus A}$$
and the \emph{contraction}
$$\L/A = \{X\in\L | A\subseteq X^0\}
\subseteq \{+,-,0\}^{E\setminus A}$$
\noindent each define the set of covectors of another matroid 
(see~\cite{Bjorner:Oriented}, L4.1.8,  p.165). 
When viewing the
oriented matroid as an arrangement of oriented pseudospheres, the set
of pseudospheres in $A$ intersects in a lower dimensional pseudosphere
$S_A$, and $\L/A$ corresponds to the arrangement of the
pseudospheres in $E\setminus A$ on the surface of $S_A$.
The deletion $\L\setminus A$ just corresponds to the deletion of the
hyperspheres in $A$ from the arrangement.
When $A$ contains only one element $e$, we write 
$\L\setminus e = \L\setminus\{e\}$, and $\L/e = \L/\{e\}$.
The following fact will be used by our algorithm:
\begin{proposition}[3.4.8,  p.123 in~\cite{Bjorner:Oriented}]
If the matroid with cocircuits $\L\subseteq \{+,-,0\}^E$ is acyclic
and $e\in E$, then $\L/{e}$ is acyclic if and only if $e$ is
an extreme element.
\end{proposition}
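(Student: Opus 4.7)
The plan is to prove the two directions separately, relying only on the covector axioms (CV1)--(CV3) together with the fact, noted in the Preliminaries, that every covector of $\L$ can be written as a composition $C_1 \circ C_2 \circ \cdots \circ C_k$ of cocircuits. Throughout, write $P = (+,+,\ldots,+)$, which lies in $\L$ since $\L$ is acyclic.

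For the direction ``$e$ extreme $\Rightarrow \L/e$ acyclic,'' I would take a covector $X \in \L$ witnessing extremality of $e$, so $X_e = +$ and $X_f \in \{0,-\}$ for every $f \neq e$. By (CV1), $-X \in \L$ satisfies $(-X)_e = -$ and $(-X)_f \in \{0,+\}$. Since $P_e = +$ and $(-X)_e = -$ disagree in sign, axiom (CV3) applied at $e$ produces a $Z \in \L$ with $Z_e = 0$ and $Z_f = (P \circ (-X))_f$ for every $f \neq e$ where $\{P_f,(-X)_f\} \neq \{+,-\}$. But $\{P_f,(-X)_f\} \subseteq \{0,+\}$ for all $f \neq e$, so that condition is automatic everywhere off $e$, and each such coordinate evaluates to $Z_f = P_f = +$. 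Hence $Z$ restricted to $E \setminus \{e\}$ is the all-positive vector, certifying that $\L/e$ is acyclic.

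For the reverse direction, I would start from $Z \in \L$ with $Z_e = 0$ and $Z_f = +$ for $f \neq e$, which exists by acyclicity of $\L/e$. Its negation $-Z \in \L$ has $(-Z)_e = 0$ and $(-Z)_f = -$. The idea is to promote the zero at position $e$ to a $+$ without disturbing the remaining coordinates; I would do this by composing with a cocircuit $C$ such that $C_e = +$. By (CV2), $(-Z) \circ C \in \L$, and the definition of $\circ$ gives $((-Z) \circ C)_e = C_e = +$ (since $(-Z)_e = 0$) while $((-Z) \circ C)_f = (-Z)_f = -$ for every $f \neq e$. This covector has $e$ as its unique positive coordinate, exhibiting $e$ as extreme.

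The one substantive step, and the main obstacle, is producing such a cocircuit $C$ with $C_e = +$. I would obtain it by writing $P = C_1 \circ C_2 \circ \cdots \circ C_k$ as a composition of cocircuits and taking $C = C_i$ for the smallest $i$ with $(C_i)_e \neq 0$. The definition of the composition operator then forces $P_e = (C_i)_e$, and since $P_e = +$ we conclude $(C_i)_e = +$, as needed. With this cocircuit in hand, the composition $(-Z) \circ C$ completes the reverse direction.
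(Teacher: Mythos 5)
Your proof is correct. Note first that the paper does not prove this proposition at all---it is cited as Proposition 3.4.8 from Bj\"orner et al.\ and used as a black box---so there is no authorial argument to compare against; the review is purely of your reasoning. Both directions go through: applying the elimination axiom (CV3) to $P=(+,\ldots,+)$ and $-W$ (where $W$ witnesses extremality) correctly produces a covector $Z$ with $Z_e=0$ and $Z_f=+$ for all $f\neq e$, since $\{P_f,(-W)_f\}\subseteq\{0,+\}$ off $e$ so the elimination condition applies everywhere and $(P\circ(-W))_f=P_f=+$; and in the reverse direction the composition $(-Z)\circ C$ gives a covector whose unique positive entry is at $e$.

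One remark worth making: the step you flag as ``the one substantive step, and the main obstacle''---manufacturing a cocircuit $C$ with $C_e=+$ by decomposing $P$ into cocircuits and picking the first $C_i$ with $(C_i)_e\neq 0$---is a detour you do not need. Axiom (CV2) composes arbitrary covectors, not only cocircuits, so you may take $C=P$ itself: $((-Z)\circ P)_e = P_e = +$ because $(-Z)_e=0$, and $((-Z)\circ P)_f = (-Z)_f = -$ for $f\neq e$ because $(-Z)_f=-\neq 0$, so $(-Z)\circ P\in\L$ exhibits $e$ as extreme directly. This trims the argument to pure (CV1)--(CV3) plus acyclicity and avoids any appeal to the cocircuit generation of $\L$, which the paper states but does not derive.
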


The contraction for a chirotope $\chi$ of rank $d{+}1$, assuming
$|A|\leq d$ is the restriction to fixing the $|A|$ first arguments of
$\chi$ to the elements of $A$, that is, 
$$\chi_A(x_1,\ldots,x_{d+1-|A|}) = \chi(A,x_1,\ldots,x_{d+1-|A|}).$$
Again, if $A=\{e\}$ we write $\chi_e = \chi_{\{e\}}$.
For example, if $\chi$ is the order type of a set of points in $\R^d$,
then the restriction $\chi_{\{q\}}$ corresponds to the central projection of
all points on a sphere around $q$.

Suppose $E$ is a set of  planar points, $\chi$ is
its order type and $\L\subseteq \{+,-,0\}^{|E|}$ is the set of
covectors. An oriented line $\ell$ rotating about a point $q\in E$ will meet all
other points of $E$ in a cyclic fashion (some simultaneously). 
In a full rotation, $\ell$
will meet each point exactly twice, once on its positive side and once
on its negative side. The contraction $\L/q$ is an oriented
matroid of rank 2. It corresponds to an arrangement of oriented
0-spheres on a $1$-sphere (i.e., a circle). That is, each element is
represented by two points identifying a semi-circle on the
circle. Walking clockwise in the positive semi-circle corresponding to
some point $p$, every other element will appear exactly once, either
entering its positive or its negative semi-circle. Walking
clockwise in the negative semi-circle of $p$, the other elements of
$E\setminus\{p,q\}$ are encountered in the same order, but the sign of the semi-circle
entered is reversed.

More generally, suppose $\L\subseteq \{+,-,0\}^{|E|}$ is the set of covectors
of an oriented matroid of rank $d+1$ and $\chi$ is the corresponding
chirotope.
Given a subset $A\subseteq E$ of 
$d$ elements in general position
(i.e., there is a $e\in E$ such that $\chi(A,e)\neq 0$),
the contraction $\L/ A$ is an oriented matroid of rank $2$, and
induces a double (signed) cyclic ordering (with ties) of all other elements of
$E$. If $\L$ corresponds to a set of points in $\R^d$, then this is
the order in which the points of $E$ are  swept by a hyperplane
rotating about the points of $A$.   If the pseudosphere
arrangement of $\L$ has been precomputed, then the cyclic ordering of
$\L/A$ can be found in $O(n)$ time by a simple walk in the
associated data structure.   \\

\paragraph{Flags.}
The method of Goodman and Pollack~\cite{goodman_pollack_83_sorting} for order type isomorphism, as well
as the one presented here, starts by defining a small set of good
candidate canonical orderings of the point set.
As discussed above, contracting an oriented matroid to one of rank 2
produces a cyclic ordering, however that ordering might have ties
(e.g. points can be swept simultaneously by a line in $\R^2$), and
we haven't determined where the canonical ordering should start.
For this, we translate the \emph{face flags} used by Goodman and Pollack to
the language of oriented matroids.

Let $\nabla$ be an abstract order type of dimension $d$, or the
chirotope of an acyclic oriented matroid of rank
$d{+}1$. Let $\L$ be the set of covectors of that 
oriented 
matroid.
Assume the corresponding arrangement $\A$ of pseudohyperspheres has
been constructed in  $O(n^d)$ time using the algorithm above.

The $(d{-}1)$-{\em facets} of the convex hull of the (abstract)
order type $\nabla$ are the vertices of the $(+,\ldots,+)$ face of the
arrangement $\A$.
In general, the $(d{-}1{-}i)$-faces of the convex hull are the $i$-faces
of the $(+,\ldots,+)$ face of $\A$.

A sequence of covectors $\phi = (X^{(1)},X^{(2)},\ldots,X^{(d-1)})$ is a
face-flag if $(+,\ldots,+)=X^{(0)}>X^{(1)}>\ldots>X^{(d-1)}>X^{(d)}>\orig$
is a maximal chain for some $X^{(d)}$ 
in the covector partial order $(\L,\leq)$.
Each $X^{(i)}$ is an $i{-}1$-face of the convex hull.  
For each $1\leq i\leq d-1$, let $e_i$ be some element in
$z(X^{(i)})\setminus z(X^{(i-1)})$. Such an element always exists by
the strict inequalities in the chain.
Goodman and Pollack~\cite{goodman_pollack_83_sorting} give an upper bound of
$O(n^{\lfloor d/2\rfloor})$ on the number of face-flags for the convex
hull of a set of points in $\R^d$. The proof essentially applies the
upper bound theorem, which is valid for oriented matroids of rank
$d{+}1$, so the same bound applies to the general case.

We will now define an ordering $\pi_\phi:[n]\rightarrow E$ on the
elements of $E$ determined by the face-flag $\phi$.
The covector $X^{(d-1)}$ is a (1-dimensional) circle $C$ in $\A$.
The contraction $\L' = \L/z(X^{(d-1)})$ is an acyclic oriented matroid 
(see~\cite{Bjorner:Oriented}, Proposition
9.1.2,  p.378) of rank 2. Thus, it is equivalent to an
arrangement of 0-spheres on $C$. The arrangement for $\L'$ can
be reconstructed in $O(n)$ time using a precomputed
arrangement for $\L$.
Let $Y$ and $Z \in \L'$ be the two covectors of the two vertices
bounding the positive face $(+,\ldots,+)$ on $C$.
Pick two elements $e_Y\in z(Y)$ and $e_Z\in z(Z)$. The positive
direction along the circle is defined using the sign of
$\nabla(e_1,\ldots,e_{d-1},e_Y,e_Z)$. Assume w.l.o.g. that this sign is
$+$ (otherwise swap $Y$ and $Z$). 

Walking in the positive direction along $C$ starting from the facet
$(+,\ldots,+)$, we encounter vertices with covectors
$Y=Y^{(1)},\ldots,Y^{(k)},-Y{(1)},\ldots,-Y^{(k)}$. Let 
$E_i = z(Y^{(i)})\cup z(X^{(d-1)})$ for $i=1,\ldots,k$. 
Each deletion $\L^{(i)} = \L\setminus(E\setminus E_i)$ is an acyclic oriented
matroid of rank $d$ and contains the flag
$\phi'=(X^{(1)},X^{(2)},\ldots,X^{(d-2)})$.
In order to compute the order $\pi_\phi$ for $E$ in $\L$, we
recursively compute the order using flag $\phi'$ for $E_i$ in
$\L^{(i)}$. The resulting order for $E$ is obtained by listing the
elements of each $E_i$, $i=1,\ldots,k$, omitting the elements from
$z(X^{(d-1)})$ for $i\geq 2$.

\begin{theorem}
Assume the arrangement $\A$ of an acyclic oriented matroid $(E,\L)$ has been
precomputed. 
Then, given a face-flag $\phi$, it is possible in $O(n)$ time to
produce an order $\pi_\phi$ of the elements of $E$ that only depends
on $\L$ and $\phi$. 
\end{theorem}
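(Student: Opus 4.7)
I would prove the theorem by induction on the rank of the oriented matroid (equivalently, on the dimension $d$), following the recursive construction already sketched in the paragraph preceding the statement. The base case is rank $2$ (dimension $d=1$): the flag $\phi$ is empty, the arrangement is a signed cyclic sequence of $0$-spheres on a $1$-sphere, and $\pi_\phi$ is simply the walk along that circle starting at the positive face in the direction determined by $\L$. This clearly takes $O(n)$ time and depends only on $\L$.

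\textbf{Inductive step.} Assuming the theorem for rank $d$, I would carry out the construction verbatim. First, extract the arrangement of $\L' = \L/z(X^{(d-1)})$ from the precomputed $\A$ in $O(n)$ time; this is a rank-$2$ arrangement on the circle $C$ corresponding to $X^{(d-1)}$. Locate the two vertices $Y,Z$ that bound the positive face on $C$, and use the sign of $\nabla(e_1,\ldots,e_{d-1},e_Y,e_Z)$ to orient $C$. Walk along $C$ in the positive direction, and for each vertex $Y^{(i)}$ encountered, apply the induction hypothesis to the deletion $\L^{(i)} = \L\setminus(E\setminus E_i)$ together with the truncated flag $\phi' = (X^{(1)},\ldots,X^{(d-2)})$, appending the resulting order for $E_i$ to $\pi_\phi$ (for $i\geq 2$ we omit the already-listed elements of $z(X^{(d-1)})$).

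\textbf{Correctness.} There are two obligations: (i) every choice made in the construction is forced by $\L$ and $\phi$ alone, and (ii) $\phi'$ is a genuine face-flag of $\L^{(i)}$ so that the inductive hypothesis applies. For (i), the only genuinely ambiguous choice is the representative $e_i \in z(X^{(i)})\setminus z(X^{(i-1)})$ (and $e_Y,e_Z$) used in the direction test. I would argue that two representatives inside $z(X^{(i)})\setminus z(X^{(i-1)})$ can be exchanged without altering the chirotope value, since by definition their pseudospheres contain the same flat $X^{(i)}$, so the orientation relative to any fixed $d$ other elements is preserved; hence the sign, and therefore the chosen direction of the walk, depends only on $\L$ and $\phi$. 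For (ii), the chain $\phi$ restricted to $E_i$ remains a maximal chain in $(\L^{(i)},\leq)$ with top element $(+,\ldots,+)$, and $\L^{(i)}$ is acyclic by the cited contraction/deletion properties (Proposition 9.1.2 of~\cite{Bjorner:Oriented}).

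\textbf{Time bound.} This is the part I expect to require the most care. The walk along $C$ visits at most one vertex per element of $E\setminus z(X^{(d-1)})$, so it contributes $O(n)$ work at this level. Each recursive call runs on $\L^{(i)}$ with $|E_i|$ elements and rank $d$, and by the inductive hypothesis completes in $O(|E_i|)$ time, provided its arrangement is precomputed; this precomputation can either be inherited by restricting $\A$ to faces indexed by $E_i$, or rebuilt from scratch in time polynomial in $|E_i|$ (which is constant in $n$ for constant $d$). To sum the recursive costs, I would charge each element $e\in E\setminus z(X^{(d-1)})$ to the (at most two, and in the positive half-walk exactly one) vertices $Y^{(i)}$ whose kernel it lies in, using that the pseudosphere of $e$ crosses $C$ in exactly two antipodal points. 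This yields $\sum_i |E_i| = O(n) + k\cdot|z(X^{(d-1)})| = O(n)$ since $|z(X^{(d-1)})|$ is a constant depending only on $d$. Hence the total time at this level, and therefore overall, is $O(n)$.
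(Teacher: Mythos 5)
Your proof follows the same recursive construction that the paper spells out in the paragraph immediately preceding the theorem (the paper in fact states the theorem without a separate proof, treating that construction as the proof), so the approach itself is aligned. However, your time-bound argument contains a genuine gap. You write
\[
\sum_i |E_i| = O(n) + k\cdot|z(X^{(d-1)})| = O(n) \quad\text{since }|z(X^{(d-1)})|\text{ is a constant depending only on }d,
\]
but $|z(X^{(d-1)})|$ is constant only for \emph{simple} arrangements (general position). The theorem, and the paper as a whole, explicitly do not assume general position: an arbitrary number of pseudospheres can pass through the $1$-flat $C$, so $|z(X^{(d-1)})|$ can be $\Theta(n)$, and then $k\cdot|z(X^{(d-1)})|$ can be $\Theta(n^2)$. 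To actually get $O(n)$ you need the amortization hidden in the phrase ``omitting the elements from $z(X^{(d-1)})$ for $i\geq 2$'': the elements of $z(X^{(d-1)})$ are common to every $E_i$, so their relative order should be produced once and shared across the sub-calls, charging each element of $E$ a constant per recursion level rather than once per $E_i$ it appears in. As written, your charging scheme double-counts those shared elements $k$ times.

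Two smaller points. First, your base case is under-specified: for a rank-$2$ acyclic matroid with an empty flag, $\L$ alone does not pick between the two antipodal walks around the circle, so ``the direction determined by $\L$'' has no content; you must make, and justify, a tie-breaking rule that is a function of $\L$ only. Second, the well-definedness argument for the choice of $e_i$ is asserted rather than proved: ``the orientation relative to any fixed $d$ other elements is preserved'' is false for generic $d$-tuples, and what actually saves you is the specific nesting $X^{(0)} > \cdots > X^{(d-1)}$ (which forces $X^{(i-1)}_{e_i} = +$ for any admissible representative) together with the other $d$ arguments of $\nabla$ lying on or bounding the flag; this needs to be argued rather than waved at.
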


\section{2D}
The order type of a planar set $P$ of $n$ points
is characterized by a predicate $\nabla^P(i,j,k) = \nabla(p_i,p_j,p_k)$, $i,j,k\in [n]$ whose
sign is $0$, $-$, or $+$, depending on whether the ordered triple
$(p_i,p_j,p_k)$ is collinear, clockwise, or counterclockwise,
respectively.
Our algorithm will work for any predicate $\nabla^P$ that
satisfies the axioms of CC-systems~\cite{DBLP:books/sp/Knuth92}, or equivalently acyclic chirotopes
of rank 3.
 We will assume that duplicate points have been
 identified and  $P$ contains
 distinct points, not all collinear. \\

\paragraph{Spiral Labelings.}
Let $c_1, c_2, \ldots, c_m$ be the convex layers of $P$, where the
successive layers are constructed iteratively by computing the convex
hull of $P$ (including all points on the edges of the convex hull) and
removing the corresponding points from $P$.  Note that all convex
layers except possibly $c_m$ contain at least three points. The case
where $c_m$ contains exactly one point will be treated with special care below.
Using a semi-dynamic convex hull data structure, 
Chazelle~\cite{chazelle_85_convex_layers} showed
how to compute all convex layers in $O(n \log n)$ time. Although his
algorithm does not exclusively use  order type information (for
instance, it compares the $x$-coordinates of input points), a careful reading of
the article reveals that the algorithm can be modified to use only the
order type predicate. For instance, the $x$-coordinate order can be
replaced by the counterclockwise order of the points seen from an
arbitrary point on the convex hull of $P$. 
For the reader unwilling to delve into the details of that algorithm,
note that 
the convex layers can be constructed 
via a much simpler $O(n^2)$-time algorithm, 
by repeated use of the Jarvis March~\cite{Jarvis197318},
or by constructing the dual arrangement as explained in
Section~\ref{sec:preliminaries}.  
Although not optimal, this  running time will be sufficient for our purpose.

Then for each vertex $p$ on convex layer $c_j$,  for $j<m$, construct an edge
from $p$ 
to the first vertex $\tau(p)$ encountered on the counterclockwise tangent to
$c_{j+1}$ nested within. 
All such edges can be found in $O(n)$ time by
walking in parallel counterclockwise along $c_j$ and $c_{j+1}$, for
all $1\leq j < m$. 

For any convex hull vertex $p$, define the \emph{spiral
  labeling} $\rho_p$ as follows.
Traverse the convex hull in counterclockwise order starting at $p$, 
follow the tangent from the last encountered hull point to the next convex layer, and repeat. 
The first node encountered 
in
 the spiral labeling on each layer is
called a \emph{knob}.
Just like
the canonical orderings of Goodman and Pollack,  $\rho_p$ only depends on the
order type and
on
 the choice of
 $p$.

Within any layer that contains at least two points,
let $v'$ be the point counterclockwise to any given point $v$.
The oriented line through
$vv'$
 divides the first layer $c_1$ (or any layer that contains $v$) into two
nonempty subchains. Define 
$s(v)$   
as the most counterclockwise point $q$
on $c_1$ for which 
$\nabla(v',v,q)=-$.
If $v$ is 
 on $c_1$, set
$s(v)$
 to the point clockwise to 
 $v$
on $c_1$.
The order type and $v$ uniquely determine $s(v)$.

Among all layers containing at least two points, 
suppose $c_j$   contains the minimum number of points, and let $k = |c_j|$
 be the number of vertices in that layer.
 Therefore, $k(m{-}1)+1 \leq n$. 
Let $K = \{s(p)|p\in c_j\}$ be a set of at most $k$ \emph{keypoints} on the
convex hull $c_1$. 
The set $K$ depends only on the order
type of $P$. This immediately suggests a slight improvement over
Goodman and Pollack's restriction to canonical labelings: it is
sufficient to look only at labelings (e.g., spiral labelings)
generated by keypoints. 
Combining these $k$ labelings with any $O(n^2)$ size OTR (such as the
one defined by Goodman and Pollack~\cite{goodman_pollack_83_sorting}), 
we thereby obtain a 
$O(kn^2)$-time algorithm for testing order type isomorphism or finding a
MinLex labeling. We will improve this further in the next sections. \\

\paragraph{The Universal Standard Spiral Representation (USSR).}
Given a spiral labeling $\rho$, we describe here an OTR of size
$O(n^2)$. 
Although this is 
not the OTR on which the MinLex labeling
will be based,
but
 it is a first step towards building such an OTR.
For convenience, let $p_i = \rho(i)$ for $i = 1,\ldots,n$.

The \emph{Universal Standard Spiral Representation (USSR)} will be
structured as $n$ blocks, $B_1,\ldots,B_n$, one for each point, where 
successive blocks are separated using a special semicolon ';'. Assuming point $p_i$ is
on 
$c_j$,
block $B_i$ will represent the orientation of $p_i$ with
every pair of points 
in 
$A := c_1 \cup \ldots \cup c_j$. 
This will clearly
constitute an OTR 
because the orientation of any  triple
of points will be encoded on the block of the point(s) on the
deepest layer among those three. 
If the deepest layer $c_m$ contains only
one point, that point always has the last label in any spiral
labeling, that is, it is $p_n$. In that case, 
the last block 
$B_n$ is
called the \emph{East Block}.

Each block $B_i$ will list all points of $A$ in
radial order. Special care is taken in order to 
handle degeneracies and ensure that the representation only depends on
the order type, the labeling, and $p_i$. First, separate all points
into sets 
$A^\sigma = \{q\in A | \nabla(p_i,s(p_i),q) = \sigma\}$
for $\sigma = -,+,0$. The {\em  radial} order is 
that in which a line
passing through $p_i$ rotating in counterclockwise
direction encounters
the points of $A$. Groups of points collinear with $p_i$ will be
equal in the order. The order will start with the set $A^0$. Then
the order for points in $A^+\cup A^-$ can be found using a standard sorting
algorithm with $\gamma\nabla(p_i,q,r)$ as a comparison operator where
$\gamma$ is $+1$ if $q$ and $r$ are both in $A^+$ or both in $A^-$,
and $-1$ if they are in different sets.  
The order is  encoded by writing the labels of the points in their
order, preceding each point in $A^-$ by the symbol ``$-$'' and each
point in $A^+$ by ``+'', and collecting
groups of ``equal" (collinear) points in parentheses. 
For points in $A^0$, precede each point by ``$-$'' if it is before $p_i$
in the direction $p_is(p_i)$, and by ``+'' otherwise.
The collinear
points are listed in increasing order in the direction $p_is(p_i)$ for
$A^0$, and in the direction going from points in $A^-$ (or $p_i$ if
there is none) to points in $A^+$ (or $p_i$ if there is none). 
Finally, if $c_m$ contains only one point, $p_n$, then replace
$s(p_n)$ (which is not defined) by $p_1$ in the above description for
block $B_n$.

The computation of the radial order in each block takes $O(n\log n)$
time, and so for all blocks $O(n^2 \log n)$. 
However $O(n^2)$ can be
achieved for all blocks by building the dual arrangement as described in
Section~\ref{sec:preliminaries} (or see, for
instance,~\cite{DBLP:journals/dm/GilSW92}).

Although the USSR produces a string of $O(n^2)$ size in $O(n^2)$
time, the string could change significantly if the spiral labeling of a
different keypoint were to be used. First, the blocks would have to be
reordered according to the new labeling, then although the order of
the points would remain the same within each block except possibly the
last one, all the labels for the points listed in that block would
change. Finally in the case where $c_m$ is of size 1, the last block would
have to be recomputed. Performing these modifications explicitly
would take time $O(n^2)$ for each of the labelings $\rho_p$ for
$p\in K$ so we will try to simplify the representation to allow for
fast lexicographic comparison while making an
explicit reconstruction unnecessary.\\

\paragraph{The Dumbed Down Representation (DDR)}
 is identical to the USSR except that
each vertex label is replaced by the level number on which that vertex lies.
As the level of a point does not depend on a specific labeling, but only
on the order type, this makes an implicit computation of the DDR for a different labeling
much easier. First notice that, although the blocks are reordered
according to the new labeling, the content of each block (except
possibly the last one) remains identical since it no longer depends on
the labeling. In the case when $c_m$ contains only the point $p_n$, the
block $B_n$ is recomputed because the starting point $\rho(1)$ for
the radial ordering
changes.

Therefore, after 
computing the first DDR for one spiral labeling, each
subsequent DDR can be computed implicitly for each
labeling
 in $O(n)$
time by reordering the blocks and possibly recomputing the last one.
As this would have to be
 done for each labeling $\rho_p$, $p\in K$,
the total construction cost is  $O(kn) = O(n^2)$.
It remains to show how to 
 find the lexicographically smallest
of the $O(k)$ DDR strings
in quadratic time. Since each of them has length $O(n^2)$, we
need to find a way to compare DDR strings without reconstructing them
explicitly. 

After constructing all blocks of the first DDR, build a
trie containing all the blocks, in $O(n^2)$ time (linear in the total
length of the strings). 
A  simple
 in-order walk in the
trie
 will reveal
the lexicographic order of the $n$ blocks, and if any are
identical. Assign to each distinct block a new letter with an order
that matches
 the lexicographic order of the blocks. 
 Rewriting
the DDR using these new letters for all blocks except the last one
which is kept intact, we obtain a string of length $O(n)$ on an
alphabet of length $O(n)$.  This substitution preserves the
lexicographic order, as that order is determined by the first mismatch
between two DDR strings. If the mismatch occurs in one of the $n{-}1$
first blocks, then the new letter for that block will be the first
mismatch in both compressed strings.

Using the same alphabet, we can 
write down
 a compressed version of the DDR for each of the other labelings
in $O(n)$ time: Shuffle the letters of the $n{-}1$ first blocks using
the new labeling, and reconstruct the last block.
Thus the lexicographically smallest DDR can be found in $O(kn)$
time.
This would solve our problem if the DDR was an OTR. However in some
cases the DDR might not contain enough information to recover the
orientation of some triples. To remedy this, we will complement the
DDR with some extra information. \\

\paragraph{The Knob Groups Block (KGB).}
Recall that the knobs of a spiral labeling $\rho$ are the first vertices
encountered by the spiral on each layer.
For each block $B_i$ of the USSR for $\rho$, take note of the position
of the knob of each layer within $B_i$. This is the {\em Knob Group} of
$B_i$. Write down the Knob Groups for $B_1,\ldots,B_n$ consecutively,
separating consecutive $B_i$ by a colon ','. Call the resulting string
the \emph{Knob Groups Block (KGB)}.
The number of knobs recorded for each block is $m$ so the total length
of the KGB is $O(mn)$. 
The knobs for any spiral labeling can be found in $O(m)$ time using
the precomputed tangents $\tau(p)$. Therefore, after computing the
USSR for any spiral labeling, the KGB for any other spiral labeling
can be constructed in $O(mn)$ time.\\

\paragraph{DDR + KGB = USSR.}
For any spiral labeling $\rho$, the labels within a layer $c_j$ 
are drawn from the same set of integers
$[|c_1|+\ldots+|c_{j-1}|+1,  |c_1|+\ldots+|c_j|]$ and are consecutive
along the boundary of $c_j$. Therefore if the knob in each layer is
known, as well as the vertices of each layer in counterclockwise
order, then the spiral labeling can be reconstructed.
In each block $B_i$ of a DDR, the counterclockwise ordering of the
vertices in layer $j$ is exactly the order in which $+j$ appears
followed by the order in which $-j$ appears. Therefore, from block
$B_i$ of a DDR, and using the KGB, the corresponding block of the USSR
can be reconstructed: For each layer number $j$, use the KGB to find
the occurrence of $\gamma j$ that corresponds to the knob on layer
$j$, where $\gamma$ is either $+$ or $-$. Replace $j$ by 
$|c_1|+\ldots+|c_{j-1}|+1$, then replace all subsequent occurrences of
$\gamma j$ sequentially by incrementing the label. 
Let $\bar\gamma$ be the opposite sign of $\gamma$. Starting from the
beginning of $B_i$, continue by replacing successive occurrences of 
$\bar\gamma j$, and finally starting again from the beginning of $B_j$
replace the remaining occurrences of $\gamma j$ until returning to the
knob. 

We have  shown that using a DDR and a KGB, we can reconstruct the
corresponding USSR. This implies that the concatenation of the DDR and
the KGB is an OTR. Using the compressed DDR, the
total length of this OTR is $O(mn)$, and the DDR+KGB OTR can be
constructed for each of the $O(k)$ spiral labelings $\rho_p$, $p\in K$
in $O(mn)$ time. Therefore the total construction time, and the time
to pick the labeling that produces the MinLex OTR is 
$O(kmn)$.  Recalling that $km=O(n)$, we obtain the desired bound of $O(n^2)$.

\section{$\R^d$}

An order type of a point set $P$ in $\R^d$ is characterized by a predicate
$\nabla(p_0,p_1,\ldots,p_d)$. Goodman and Pollack (see~\cite{goodman_pollack_83_sorting}, Lemma 1.7) showed that for any point $q$ on the convex hull of
$P$, the predicate 
$\nabla_q(p_0,\ldots,p_{d-1}) = \nabla(p_0,\ldots,p_{d-1},q)$
characterizes the order type of a point set in $\R^{d-1}$.

In $\R^d$, a \emph{face-flag}~\cite{goodman_pollack_83_sorting} is a sequence 
$\phi = (\phi_0,\phi_1,\phi_2,\ldots,\phi_{d-1})$ 
of faces where $\phi_i$ is of dimension $i$ and $\phi_i$ is a face of
$\phi_{i+1}$. Goodman and Pollack showed how any flag induces a
labeling $\rho_\phi$ of the point set, in a manner
very similar to what was described above when defining the USSR.
\begin{theorem}
\label{theo:Rd}
Given an order type predicate $\nabla$ for an abstract order type in
dimension $d$ (or an acyclic oriented matroid of rank $d{+}1$), there
is an algorithm that in time $O(n^d)$ determines the automorphism
group of $\nabla$. 
It outputs a maximal set of canonical labelings $\Psi(\nabla) =
(\rho_1,\ldots,\rho_k)$ such that 
$\nabla\circ\rho_i=\nabla\circ\rho_j$ for $i,j\in\{1,\ldots,k\}$. 
\end{theorem}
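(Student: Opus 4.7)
The plan is to lift the two-dimensional USSR/DDR/KGB construction to arbitrary $d$, maintaining the overall budget of $O(n^d)$ dictated by the size of the pseudohyperplane arrangement of the dual. First I would precompute the arrangement $\A$ in $O(n^d)$ time using the Edelsbrunner--O'Rourke--Seidel algorithm described in the preliminaries; from $\A$ one can read off the convex hull, iteratively peel it to obtain the convex layers of the order type, and enumerate all face-flags. By the upper bound theorem for acyclic oriented matroids of rank $d{+}1$ there are $O(n^{\lfloor d/2\rfloor})$ flags, and by the preceding theorem each face-flag $\phi$ yields a labeling $\pi_\phi$ computable in $O(n)$ time from $\A$. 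These flag-induced labelings play the role of spiral labelings.

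For a fixed reference flag $\phi^*$ I would build a $d$-dimensional USSR: one block $B_i$ per point $p_i = \pi_{\phi^*}(i)$ listing, in a flag-determined radial order, the signs of $\nabla$ applied to $p_i$ together with every ordered $d$-tuple drawn from the $p_j$ with $j<i$ in the same or outer flag layer. Because every $(d{+}1)$-subset of $E$ has a unique latest element in the labeling, the concatenation is an OTR of total length $O(n^d)$, and it can be written down in the same time by repeated rank-$2$ contractions read off from $\A$. The DDR is obtained, as in the planar case, by substituting point labels with their convex-layer indices (which depend only on $\nabla$), and the KGB records, for each block and each layer, the position of that layer's knob within the block.

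The crucial observation is that only a small portion of the DDR genuinely depends on the flag: the block content for any point not incident to the innermost face of $\phi$ is determined by $\nabla$ and the layer structure alone and hence is identical across all flags. Following the planar template, I would compress the per-block alphabet to size $O(n)$ via a trie built once for $\phi^*$, allowing the compressed DDR for any other flag $\phi$ to be obtained from $\pi_\phi$ by a block-shuffle together with recomputation of only those blocks whose point lies in the innermost flag face. Combined with the $O(n)$-time KGB update and the fact that DDR+KGB=USSR (so lexicographic comparison of the compressed representations is equivalent to comparing the full OTRs), a MinLex scan across all flags identifies a canonical labeling $\rho^*$, and the set of flags whose compressed representation ties for the minimum yields the automorphism group $\Psi(\nabla)$.

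The main obstacle is the amortization: one must show that the per-flag update cost, summed over all $O(n^{\lfloor d/2\rfloor})$ face-flags, is $O(n^d)$, i.e., an amortized $O(n^{\lceil d/2\rceil})$ per flag. In the planar case this followed from the identity $km=O(n)$ between the number of keypoints and the minimum layer size; in $\R^d$ the analogous bound should come from applying the upper bound theorem recursively to the lower-dimensional face-flags living on innermost faces, together with the fact that the flag-independent part of the DDR is built once and shared. Formalizing this counting argument, and verifying that the trie-based alphabet compression carries over to the $d$-dimensional blocks without inflating their lengths, is the heart of the proof; once it is in place, the claimed $O(n^d)$ bound and the correctness of $\Psi(\nabla)$ follow exactly as in the planar construction.
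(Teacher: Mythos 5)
Your proposal takes a genuinely different route from the paper, and it contains a gap that I do not think can be patched. The paper's proof is \emph{recursive in dimension}: for each extreme point $q$ of the current layer, it restricts $\nabla$ to $q$ to obtain a $(d{-}1)$-dimensional order type $\nabla^{(i)}_q$, recursively computes a canonical OTR and the canonical labeling set $\Psi(\nabla^{(i)}_q)$ for it, and then assembles the $d$-dimensional OTR as the concatenation, over points, of $\pi_\phi[\rho_{\min}]$ followed by $\mathrm{OTR}(\nabla^{(i)}_q\circ\rho_{\min})$. The key saving is that $\mathrm{OTR}(\nabla^{(i)}_q\circ\rho_{\min})$ does not depend on the flag $\phi$ at all, so it is precomputed once per point and replaced by a single fresh character; each per-flag representation then has length only $O(n^2)$, and a trie over the precomputed sequences $\pi_{\phi_0}[\rho]$ lets one resolve the flag-dependent $\rho_{\min}$ choices without rescanning the long strings. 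The total cost is then bounded case by case for $d=3,4$ and $d\geq 5$.

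Your plan instead tries to lift the planar $\mathrm{USSR}/\mathrm{DDR}/\mathrm{KGB}$ machinery directly to dimension $d$, and I see two problems. First, the equation $\mathrm{DDR}+\mathrm{KGB}=\mathrm{USSR}$ is a genuinely two-dimensional fact: it hinges on the radial (cyclic, rank-$2$) order around a point, within which each layer's vertices appear as a contiguous arc that a single knob can anchor; replacing labels by layer indices and recording one knob position per layer then lets you reconstruct the labels. In $\R^d$ with $d\geq 3$, the "view'' of the remaining points from a point $p_i$ is a full $(d{-}1)$-dimensional order type, not a cyclic order, and there is no analogous contiguity of a layer's labels inside a block that a bounded number of ``knobs'' could pin down. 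You would need to canonically label that $(d{-}1)$-dimensional view, which is exactly the recursion the paper performs and which your plan tries to avoid. Second, the amortization you invoke has no analogue: the planar bound $km=O(n)$ relating the number of keypoints to the minimum layer size does not generalize to a product of flag counts and layer sizes in higher dimension, and the paper does not attempt such a generalization; its amortization is entirely different (flag-independence of the recursive OTRs plus the trie-degree/leaf-count argument). Your sketch acknowledges that this counting argument is ``the heart of the proof'' but supplies neither the right invariant nor a substitute for the dimension recursion, so as written the proposal does not establish the theorem.
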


\begin{corollary}
Given an order type predicate $\nabla$ for an abstract order type in
dimension $d$ (or an acyclic oriented matroid of rank $d{+}1$), there
is an algorithm that in time $O(n^d)$ computes a canonical
representation $OTR(\nabla)$ of size $O(n^d)$, and
$OTR(\nabla\circ\rho)$ is the same for all $\rho\in\Psi(\nabla)$.
\end{corollary}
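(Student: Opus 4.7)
The plan is to reduce the corollary to Theorem \ref{theo:Rd} by appending a deterministic encoding step whose output depends only on the relabeled predicate $\nabla\circ\rho$. The key leverage from the theorem is that $\nabla\circ\rho_i=\nabla\circ\rho_j$ for every $\rho_i,\rho_j\in\Psi(\nabla)$, so any representation $E$ that is a function of its input predicate automatically produces the same string on all such relabelings.

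Concretely, I would first invoke the algorithm of Theorem \ref{theo:Rd} on $\nabla$ in time $O(n^d)$ to obtain $\Psi(\nabla)=\{\rho_1,\ldots,\rho_k\}$, together with the pseudohyperplane arrangement $\A$ built by that algorithm as described in Section \ref{sec:preliminaries}. I would then pick any $\rho\in\Psi(\nabla)$, say $\rho_1$, and compute a fixed OTR $E(\nabla\circ\rho)$ of size $O(n^d)$. A natural choice is the Goodman--Pollack encoding that, for each $d$-tuple $(i_1,\ldots,i_d)$, records the number of indices $i_0$ with $\nabla\circ\rho(i_0,i_1,\ldots,i_d)=+$. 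All $O(n^d)$ counts can be read off by traversing the face lattice of $\A$ after permuting its element labels by $\rho$, in time proportional to the lattice's size. The algorithm then outputs $OTR(\nabla):=E(\nabla\circ\rho)$.

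Correctness follows because, by Theorem \ref{theo:Rd}, the relabeled predicate $\nabla\circ\rho$ is the same for every $\rho\in\Psi(\nabla)$, so the output is independent of which $\rho$ was selected in the pipeline. Moreover, running the same pipeline on $\nabla\circ\rho$ for any $\rho\in\Psi(\nabla)$ returns a set $\Psi(\nabla\circ\rho)$ whose elements $\sigma$ all satisfy $(\nabla\circ\rho)\circ\sigma=\nabla\circ(\rho\sigma)$, where $\rho\sigma\in\Psi(\nabla)$; hence the encoding step is fed the same predicate and returns the same string, giving $OTR(\nabla\circ\rho)=OTR(\nabla)$ as required. The only nonroutine point I anticipate is confirming that the $O(n^d)$ Goodman--Pollack counts can indeed be harvested from $\A$ in time $O(n^d)$ rather than the naive $O(n^{d+1})$; if this extraction turns out to be awkward, one can simply take $E$ to be the relabeled face lattice of $\A$ itself, which has size $O(n^d)$, is produced directly by relabeling the output of Theorem \ref{theo:Rd}, and fully determines the oriented matroid and hence the order type.
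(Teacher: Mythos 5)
Your proof is correct, and it takes a genuinely more modular route than the paper. The paper does not give a separate proof of the corollary: its canonical $OTR(\nabla)$ is precisely the lexicographically minimum string $OTR(\nabla\circ\pi_\phi)$ that is constructed \emph{during} the proof of Theorem~\ref{theo:Rd}, so the corollary is a byproduct of the work already done there (the lemma shows the string is an OTR, the size and time analyses are carried out inline, and the MinLex selection is what makes it canonical). You instead treat Theorem~\ref{theo:Rd} as a black box: obtain $\Psi(\nabla)$ and the precomputed arrangement $\A$, then post-compose with \emph{any} deterministic OTR $E$ of size $O(n^d)$ applied to $\nabla\circ\rho$ for an arbitrary $\rho\in\Psi(\nabla)$. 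This separation makes the logic cleaner: well-definedness follows immediately from $\nabla\circ\rho_i=\nabla\circ\rho_j$, and invariance under $\rho\in\Psi(\nabla)$ follows from the canonical-labeling property ($\sigma\in\Psi(\nabla\circ\rho)$ implies $\rho\circ\sigma\in\Psi(\nabla)$, hence the encoding step sees the same predicate). The one point you flag as delicate, extracting the Goodman--Pollack counts in $O(n^d)$, is in fact fine (the counts for adjacent vertices of $\A$ differ by $\pm1$, so an incremental traversal does it), but your fallback of outputting the relabeled face lattice, serialized in covector order, is also sound and avoids the worry entirely. What the paper's approach buys is that the canonical OTR is produced at no additional asymptotic cost while searching over flags; what your approach buys is that the corollary becomes a clean formal consequence of the theorem statement plus the existence of any efficiently-computable OTR, without having to re-open the theorem's proof.
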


As in the 2D case, construct the dual arrangement for $\nabla$ and the
convex layers $c_1,\ldots,c_m$, with
the small distinction that this time, layer $c_i$ contains only the
extremal points of the point set when the points of previous layers
have been removed. That is, points in the interior of facets of the
convex hull are not included in the layer. 
The arrangement and all convex layers can be computed in $O(n^d)$ time
as explained in Section~\ref{sec:preliminaries}.
Let $\Phi$ be the set of face flags for $c_1$. Then $|\Phi| = O(n^{\lfloor d/2\rfloor})$.

Let $\nabla(p_0,\ldots,p_d)$ be the $d$-dimensional order type predicate.
As mentioned, for any point $q$ on $c_1$, restricting $\nabla$ to that
point produces an order type in $d{-}1$
dimensions, $\nabla_{q}(p_0,\ldots,p_{d-1})=\nabla(p_0,\ldots,p_{d-1},q)$.
By induction, a canonical representation for that order type, as well the associated canonical labeling(s), can be
found in $O(n^{d{-}1})$ time.
For a pair of labelings $\pi$ on $n$ elements and  $\rho$ on $r$
elements of $E$, we define the sequence
$\pi[\rho] = (\pi(\rho^{-1}(1)),\ldots,\pi(\rho^{-1}(r)))$, which
encodes the labeling $\rho$ using $\pi$.
We can now describe our representation for any labeling 
$\pi_\phi$ for $\phi\in\Phi$.

\begin{algorithmic}
\Function{OTR}{$\nabla \circ \pi_\phi$}
\For{$i= 1,\ldots,m$}
\State   (A) Let $\nabla^{(i)} = \nabla\setminus(\bigcup_{j<i}c_i)$ 
  \Comment{$\nabla$ for points of layer $i$ and up.}
\ForAll{ $q\in c_i$ in the order of $\pi_\phi$}
\State      (B) Recursively compute $\Psi(\nabla^{(i)}_q)$.
              \Comment{\parbox{2.1in}{Since $q$ is extremal for $c_i\cup\ldots\cup c_m$, $\nabla^{(i)}_q$ is acyclic.}}
\State      (C) Find the $\rho_{min}\in\Psi(\nabla^{(i)}_q)$ such that $\pi_\phi[\rho_{min}]$ is lexicographically minimum.
\State     Write $\pi_\phi[\rho_{min}]$
\State      Write $OTR(\nabla^{(i)}_q\circ\rho_{min})$
\EndFor
\EndFor
\EndFunction
\end{algorithmic}

\begin{lemma} 
The output of $OTR(\nabla \circ \pi_\phi)$ encodes the order type of
$\nabla$.
\end{lemma}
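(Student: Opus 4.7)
The plan is to prove the lemma by induction on the dimension $d$. The base case $d=2$ is handled by the previous section (or one can start the induction at $d=1$, where an acyclic order type is simply a total ordering and the encoding is trivial). Given the output $OTR(\nabla\circ\pi_\phi)$, the goal is to show that one can recover $\nabla(e_{j_0},\ldots,e_{j_d})$ for every $(d{+}1)$-tuple of elements. The key observation is that in any such tuple $T$, at least one element $q$ lies on the deepest convex layer $c_i$ that contains an element of $T$; then all other elements of $T$ lie in $c_i\cup c_{i+1}\cup\ldots\cup c_m$, which is the ground set of $\nabla^{(i)}$. By the definition of the restriction predicate recalled at the start of the section, $\nabla(T)=\nabla^{(i)}_q(T\setminus\{q\})$, so evaluating $\nabla$ on $T$ reduces to evaluating an orientation in the $(d{-}1)$-dimensional order type $\nabla^{(i)}_q$.

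Next I would describe how the algorithm's output lets us compute this quantity. The outer loop produces, for each pair $(i,q)$ with $q\in c_i$, a block containing the sequence $\pi_\phi[\rho_{min}]$ followed by $OTR(\nabla^{(i)}_q\circ\rho_{min})$. The partition of elements by convex layers is determined by $\nabla$ alone, and the order of points within each layer is fixed by $\pi_\phi$, so the block corresponding to $(i,q)$ can be located in the OTR string (the lengths of the recursive sub-OTRs being well-defined by the inductive encoding). By the inductive hypothesis, the inner string $OTR(\nabla^{(i)}_q\circ\rho_{min})$ encodes the $(d{-}1)$-dimensional order type $\nabla^{(i)}_q\circ\rho_{min}$. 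Meanwhile $\pi_\phi[\rho_{min}]$ lists, in $\rho_{min}$-order, the $\pi_\phi$-labels of the elements of the ground set of $\nabla^{(i)}_q$; this provides precisely the translation needed to convert the $\pi_\phi$-indices of $T\setminus\{q\}$ into $\rho_{min}$-indices. Querying the decoded $(d{-}1)$-dimensional order type on these translated indices returns $\nabla^{(i)}_q(T\setminus\{q\})=\nabla(T)$.

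Finally, I would verify that the recursive call $\Psi(\nabla^{(i)}_q)$ is legitimate, since the inductive hypothesis is available only for acyclic order types of lower dimension. Acyclicity is preserved under deletion, as the all-positive covector of $\nabla$ survives into $\nabla^{(i)}=\nabla\setminus(c_1\cup\ldots\cup c_{i-1})$; moreover every $q\in c_i$ is an extreme element of $\nabla^{(i)}$ by the definition of convex layers. The Proposition on contractions recalled in the Preliminaries then guarantees that $\nabla^{(i)}_q$ is an acyclic oriented matroid of rank $d$, so the inductive hypothesis applies. The main technical obstacle is the careful bookkeeping of labels: one must separately track how $\pi_\phi$ labels the elements of $E$, how $\rho_{min}$ labels the subset of $E$ appearing in $\nabla^{(i)}_q$ after projecting through $q$, and how $\pi_\phi[\rho_{min}]$ mediates between the two; once this bookkeeping is set up, recovering $\nabla(T)$ from the OTR string is a matter of locating the relevant block and invoking the inductive hypothesis.
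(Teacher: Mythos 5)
Your proof follows essentially the same route as the paper's: for a tuple $T$, identify the deepest layer $c_i$ meeting $T$, pick $q\in T\cap c_i$, reduce to the $(d{-}1)$-dimensional predicate $\nabla^{(i)}_q$ on $T\setminus\{q\}$, translate $\pi_\phi$-labels to $\rho_{min}$-labels via $\pi_\phi[\rho_{min}]$, and recurse into the stored sub-OTR. One small imprecision worth flagging: you write $\nabla(T)=\nabla^{(i)}_q(T\setminus\{q\})$ using set notation, which sweeps under the rug the alternating-sign adjustment required when $q$ is not in the distinguished position of the tuple; the paper makes this explicit as the factor $(-1)^{j-1}$ in $\nabla(x_1,\ldots,x_{d+1})=(-1)^{j-1}\nabla^{(i)}_{x_j}(x_1,\ldots,x_{j-1},x_{j+1},\ldots,x_{d+1})$. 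Your additional verification that $\nabla^{(i)}_q$ is indeed an acyclic matroid of lower rank (so the induction applies) is a welcome point that the paper relegates to an inline comment in the pseudocode rather than the proof itself.
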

\begin{proof}
For any $x_1,\ldots,x_{d+1}$, let $i$ be the smallest number such that
$c_i$ contains some point $x_j$. Look up
$OTR(\nabla^{(i)}_{x_j}\circ\rho_{min})$ and the corresponding 
$\pi_\phi[\rho_{min}]$. All points $x_1,\ldots,x_{d+1}$ are in 
$c_i\cup\ldots\cup c_m$, and the label $\rho_{min}(y)$ for 
$y\in c_i\cup\ldots\cup c_m$ is the rank of
$\pi_\phi(y)$ in $\pi_\phi[\rho_{min}]$. 
Therefore, the value of 
$\nabla(x_1,\ldots,x_{d+1}) =
(-1)^{(j-1)}\nabla^{(i)}_{x_j}(x_1\ldots,x_{j-1},x_{j+1},\ldots,x_{d+1})$
can be retrieved recursively from $OTR(\nabla^{(i)}_{x_j}\circ\rho_{min})$.
\qed\end{proof}

We now turn to the size of the string. 
By induction, we observe that
$OTR(\nabla \circ \pi_\phi)$ has size $O(n^d)$ as the double loop runs
exactly once for each of the $n$ points, $\pi_\phi[\rho_{min}]$ is of
size $O(n)$ and $OTR(\nabla^{(i)}_q\circ\rho_{min})$ is of size
$O(n^{d-1})$, by induction.  The automorphism group and a set of
canonical labelings for $\nabla$ can be computed by producing
$OTR(\nabla \circ \pi_\phi)$ for all $\phi\in\Phi$ and outputting the
labelings that correspond to the lexicographically minimum OTR.
However, as $|\Phi| = O(n^{\lfloor d/2\rfloor})$, the running time of
this algorithm would be $O(n^{\lfloor 3d/2\rfloor})$. In fact this is
roughly the same algorithm as the one of Goodman and Pollack~\cite{goodman_pollack_83_sorting}.

In order to speed up this process, we notice that each
$OTR(\nabla^{(i)}_q\circ\rho_{min})$ does not depend on the choice of
$\phi$, and therefore will be the same in all 
$O(n^{\lfloor d/2\rfloor})$ OTRs. Therefore, as a preprocessing step,
we can produce $T_q =OTR(\nabla^{(i)}_q\circ\rho_{min})$ 
for all $q$ in layer $i$ and for all layers, in $O(n^d)$ time, and sort them lexicographically. 
Create a new character for each distinct $T_q$ with the same
ordering. Now for each $\phi$, we can write the compressed OTR,
replacing $T_q$ by the corresponding letter. The compressed OTR is
thus of length $O(n^2)$ and the lexicographic order matches the
order of the uncompressed strings. 
The total cost of finding the lexicographically minimum compressed
OTRs and the corresponding flags $\phi$ is 
$O(n^{\lfloor d/2 \rfloor + 2})$, which is $O(n^d)$ for $d\geq 3$.

We are now left with the delicate task of bounding the time needed to
produce each compressed OTR. Step (A) is implicit and this has no
cost. Step (B) takes time $O(n^{d-1})$, by induction. However, this
step does not depend of $\phi$ and so could be precomputed in a
preprocessing phase, in a total time $O(n^d)$.
Step (C) requires to compare up to $O(n^{\lfloor (d-1)/2\rfloor})$ strings
each of length $O(n)$. This step does depend on $\phi$. 
The total cost for all $\phi$ would then be 
$O(n^{\lfloor d/2\rfloor + \lfloor (d-1)/2\rfloor +2})$, which is
$O(n^{d+1})$.
We will need to work a bit more to remove the extra factor of  $n$.

Pick one arbitrary $\phi_0\in\Phi$. 
When precomputing all $\Psi(\nabla^{(i)}_q)$, store the sequences
$\pi_{\phi_0}[\rho]$ for each $\rho\in \Psi(\nabla^{(i)}_q)$ in a
compressed trie data structure. To look up $\rho_{min}$ given a
particular $\pi_\phi$, walk down the trie, always choosing the
child whose label $s$ minimizes $\pi_\phi(\pi_{\phi_0}^{-1}(s))$.
The size of the alphabet is bounded by $O(n)$, therefore the degree of each node of the trie is $O(n)$. Each string is bounded by $O(n)$ so the height of the trie is $O(n)$. This implies the total cost of a lookup is $O(n^2)$.
This lookup is performed for each point and each
flag $\phi\in\Phi$. Therefore, the total cost is 
$O(n^{\lfloor d/2\rfloor + 3})$, which is $O(n^d)$ for $d{\geq}5$.
For $d=3$ and $4$,
the total degree of the
entire trie is no more than the number of leaves in the trie, that is, 
$O(n^{\lfloor (d-1)/2\rfloor})$ and the cost of a query cannot exceed
that bound. Therefore the total cost for all points and flags is 
$O(n^{\lfloor d/2\rfloor + \lfloor (d-1)/2\rfloor +1})$, which is
$O(n^{d})$ for $d=3$ and $4$ as well. This completes the proof of Theorem~\ref{theo:Rd}.

\section*{Acknowledgements}
This work was initiated at the 2011 Mid-Winter
Workshop on Computational Geometry. The authors thank all participants
for providing a stimulating research environment. We also thank the
anonymous referees for numerous helpful comments.
\bibliography{refs}
\bibliographystyle{abbrv}   
\end{document}